\newtheorem{lemma}{Lemma}
\newtheorem{theorem}{Theorem}
\newcommand{\ve}{{\varepsilon}}
\def\al{\alpha}
\def\bt{\beta}
\def\eps{\varepsilon}
\begin{document}
\title{Quantifying the Benefits of Infrastructure Sharing}
\author{Matthew Andrews\thanks{Bell Labs, Nokia.} \and Milan
  Bradonji\'c\thanks{Rutgers University. Work performed while at Bell
    Labs, Nokia.} \and Iraj Saniee\thanks{Bell Labs, Nokia.}}
\date{}
\maketitle
\begin{abstract}
We analyze the benefits of network sharing between telecommunications
operators. Sharing is seen as one way to speed the roll out of expensive
technologies such as 5G since it allows the service providers
to divide the cost of providing ubiquitous coverage. 
Our theoretical analysis focuses on scenarios with two service providers and
compares the system dynamics when they are competing with the dynamics 
when they are cooperating. We show that sharing can be
beneficial to a service provider even when it has the power to
drive the other service provider out of the market, a byproduct of
a non-convex cost function. 
A key element of this study is an analysis of the
competitive equilibria for both cooperative and non-cooperative 2-person games
in the presence of (non-convex) cost functions that involve a fixed
cost component. 
\end{abstract}

\section{Introduction}
\label{s:intro}
As communication technologies become increasingly complex, Service
Providers (SPs) need
to make ever larger investments in order to bring the latest network
generation to their end users.  As one way to defray these costs,
SPs are looking at network sharing agreements that allow them to
upgrade their networks more quickly at lower cost. For example, it has been observed that the most lucrative 10\% of
mobile access markets already account for over 50\% of an SP's
revenues whereas the remaining 90\% of the markets are ``subsidized'' by
those~\cite{Larsen12}.  Network sharing is especially attractive in the less
profitable regions since it minimizes the investment that SPs
need to make there.

Network sharing can take many forms depending on the assets that are
shared. In the most extreme case the entire network is shared. For
this case the only way in which the SPs can distinguish
themselves is via different service plans. The actual network
performance will be the same for both SPs. In less extreme cases
only parts of the network will be shared. Examples include one or more
of real-estate sharing, tower sharing, RAN (radio access network)
sharing and core network sharing. Sharing of such inactive elements is
becoming increasingly popular. In China the three major operators have
formed a joint venture to share the towers that host their radio
equipment~\cite{DengWW15}. 

In this work we focus on two service providers and derive a model to help understand the implications of
network sharing. 
We wish to understand both the implications for the profit of the
SPs as well as the prices faced by the end users. We note that network
sharing is a topic of interest to regulators since it can affect the
competitive make-up of a market. A regulator may wish to make sure
that prices do not rise excessively before signing off on a sharing
agreement. We incorporate the effects of such a regulator into our
analysis. 

The main question we ask is: {\em How does network pricing and
  capacity provisioning differ in the case that the SPs
  cooperate versus the case that they compete?}  In general, we show
that sharing can be beneficial for a wide range of network
parameters. In particular:
\begin{itemize}
\item We show that a sharing strategy can generate significantly higher profits for
  the service providers than if they compete and act according to a
  Nash equilibrium strategy.
\item We demonstrate that in some situations this gain due to sharing
  holds, even if a service provider has the market power to drive the
  other service provider out of the market. 
\end{itemize}
For the case of not sharing, we analyze the competition between the
service providers both for the case in which providers decide on how
much capacity to deploy (which gives rise to a Nash-Cournot game). In
the Appendix we also study 
the case in which providers decide on the price they offer to
the market (which gives rise to a Bertrand game). A key difficulty is
that in many situations service providers have a {\em fixed cost
  component} for entering a market which gives rise to a non-convex cost
function. Competition in this setting is non-standard and
leads to a potential situation in which one provider can drive the
other out of the market. Analyzing how this occurs is a key
component of our analysis.

\subsection{Problem variants}

\begin{figure}[htb]
\begin{center}
\includegraphics[scale = 0.30]{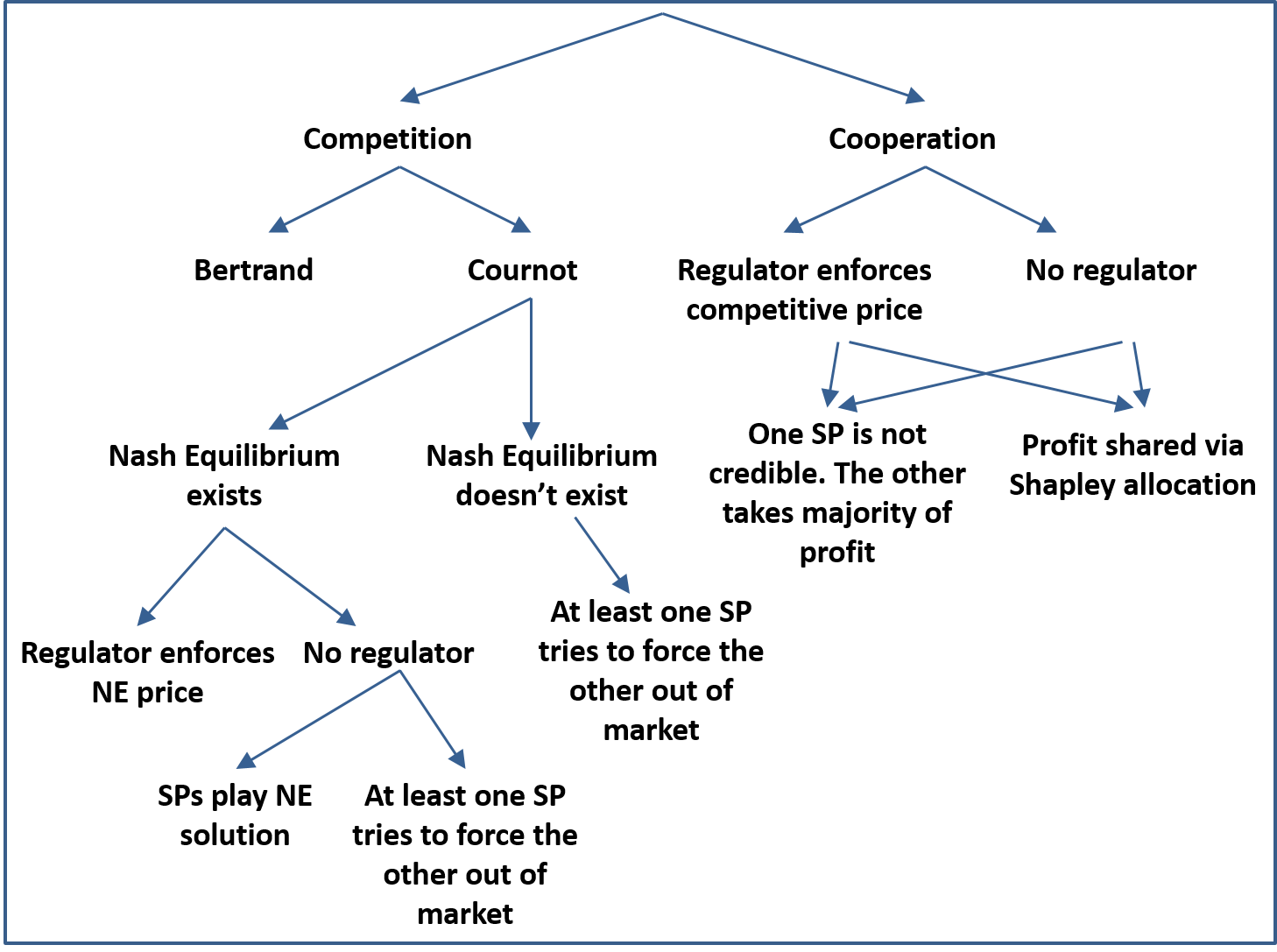}
\caption{A schematic of the problem variants that we consider.}
\label{f:hierarchy}
\end{center}
\end{figure}
Much of the paper will focus on the dynamics and outcomes of a given competitive or
collaborative scenario. Once those scenarios are established we can
determine whether or not an SP would be better off entering into a sharing
agreement or going it alone. However, there are many types of
competitive or cooperative arrangements that each lead to different
results for the SPs. We begin by giving a high-level description
of the different regimes that can arise. These can be
categorized in a hierachical manner as shown in
Figure~\ref{f:hierarchy}. 

At a high level each SP needs to determine whether to {\em compete}
with the other SP or whether to {\em cooperate} with it. For the case
that SPs decide to compete there are two types of competition,
a {\em Bertrand} game in which the SPs each offer their own price and
the end users react accordingly, and a {\em Cournot} game in which
each SP offers a certain capacity into the market and the market price
is set accordingly. 

The Cournot game admits a number of possible variations. First of all,
depending on the problem parameters, there may or may not be a {\em
  Nash equilibrium} solution for the SPs. If there is a Nash
equilibrium solution, then two situations can arise. In one of them
there is a {\em regulator} that forces the end user price to be the
one determined by the equilibrium solution. In the other there is no
regulator and so each SP has to decided whether to play the
equilibrium solution. The reason it might not do so is that depending
on the cost parameters an SP might have the ability to force the other SP
out of the market, i.e.\ it can play a strategy such that the other
SP has no incentive to participate in the market. However, that is a
risky strategy in that if both SPs try it then they could both be
worse off than playing the equilibrium solution.
We remark that the ability of one SP to drive the other out of the
market arises solely because of the fixed component in the cost that
leads to a non-convex cost function. 


Our remaining analysis considers the situation when SPs
cooperate and share their networks. For this case there are variants
depending on whether there is a regulator that can enforce price limits. If
there is a regulator then we assume that the end user price is forced
to be the same as it would be if the SPs were competing according to a
Cournot game. If there is no such regulator then we assume that the
SPs can set prices as if the combined entity were a monopoly. 

The remaining decisions refer to how the profits are shared when the
two SPs cooperate. We assume that this depends on whether or not both
SPs are {\em credible players}, i.e.\ whether or not they can
legitimately offer service by themselves. If one SP deems that the
other is not credible, then it would look to take the vast majority of
the profit for itself. If both SPs
can credibly offer service on their own, then we assume that profits
in the sharing scenario are divided up according to appropriate schemes,
(e.g.\ based on Shapley value). 

\subsection{Paper Organization}
In Section~\ref{s:model} we define the models that we
  use throughout the paper and we outline the equations that define optimal
  demand and price for a monopolistic SP.
In Section~\ref{s:narrative} we explore the dynamics for 
a prototypical example in a single geographic region. By starting with a concrete
example we can better understand the fundamental dynamics at play
rather than getting bogged down in the general equations (which can
become somewhat complex). For this example we consider most of the
regimes outlined in Figure~\ref{f:hierarchy}. 

In Section~\ref{s:general} we outline our results for the general
case with much of the detailed
analysis deferred to the Appendix. In particular, in Appendix~\ref{s:cournot} we present a more detailed study of the Cournot game
  in which the providers compete via deployed capacity. In particular,
  we derive the equations that characterize the Nash Equilibrium if it
  exists under our non-convex cost functions. We also examine the conditions under which one SP can drive
  the other out of the market. 
In Appendix~\ref{s:sharing} we determine the SP profits that arise when
they enter into a sharing agreement. These profits are determined by
how the combined profit is shared. This is calculated via two notions
of Shapley value (one of which is based on the notion of ``Shapley
value with externalities''.)
We also explore
  a regulatory framework that enforces Nash prices so that end users
  are not
  penalized in the sharing scenario. Lastly, we examine how an SP
  would approach a sharing agreement if it does not deem the other SP
  to be a credible competitor, i.e.\ if it does not believe the other
  SP has the cost structure to operate a network on its own. 

In Appendix~\ref{s:bertrand-single} we analyze the dynamics when the
competitive setting is modeled as a Bertrand game. The analysis of the
Bertrand game is in general simpler than the corresponding Cournot
analysis. This is because in a Bertrand game the SP with the
better cost structure always has the ability to drive the other SP out
of the market. In addition to the
basic Bertrand game we study a number of variants.  In one of them the
SPs are able to share network costs but they must still compete
on price. In another variant only a subset of the end users are deemed
to be price conscious. 

In Appendix~\ref{s:narrative-multiple} we extend our analysis to
a situation where {\em a priori} the SPs offer service in different geographic regions. This is an especially attractive
situation for network sharing since it allows each SP to 
offer service over the entire market more rapidly.

\subsection{Previous Work}
The GSM Association wrote an influential report~\cite{GSMA} examining
the ways in which wireless infrastructure can be shared. This report
describes some existing sharing agreements that are already in place
and discusses the economic and regulatory implications.  In
\cite{JanssenLS14}, Janssen et al.\ discuss the statistical multiplexing gains that
can be obtained by combining capacity in a network sharing
arrangement. The economics of the Chinese tower sharing agreement
mentioned earlier were analyzed by Deng et al. in \cite{DengWW15}.  
Malanchini and Gruber studied small cell sharing in
\cite{MalanchiniG15} and presented ways in which operators could still
differentiate themselves (e.g.\ via power management) even if all
network resources are shared.  The papers
\cite{AlQahtani08,KokkuMZR12,MalanchiniVA14,ValentinJA13} discuss ways
in which network sharing could be realized in practice. In particular,
\cite{KokkuMZR12} discusses a technique known as ``network slicing''
in which the resource allocation algorithms at wireless basestations
reserve a fraction of resources for each SP. 
The general economics literature contains many analyses of duopolies
with various cost structures (e.g.\ \cite{Saporiti08}). However, to the best of our knowledge
previous work has not considered Cournot competition for network
providers under non-concave
cost functions with fixed costs, and there has not been a comparison of
how the dynamics under sharing compare to the competitive dynamics. 

\section{Cost Model}
\label{s:model}

We consider two Service Providers (SPs) that we denote SP1 and SP2. We
begin with a single geographical region in which both SPs operate. If
SP $i$ serves demand\footnote{Throughout this paper we employ a coarse
  measure of demand, namely bytes per month across the whole
  region. Although this is coarse, the expenses of an operator are
  closely tied to that number. We also assume that demand is based on
  a single price for each operator. In reality each operator
  offers multiple data plans with different sizes and costs. We leave
  the incorporation of different data plans into our analysis as an
  interesting direction for future work.} of size $q_i$ then its cost is given by,
$$
C_i(q_i)=\left\{\begin{array}{cc}\alpha_i+\beta_iq_i&q_i>0\\0&q_i=0\end{array}\right.,
$$
for some parameters $\alpha_i>0,\beta_i\ge 0$. The parameter $\alpha_i$
reflects a fixed cost, e.g.\ the cost of infrastructure such as
buildings or spectrum, whereas
the parameter $\beta_i$ reflects the cost of serving a unit of demand,
e.g.\ the cost of deploying equipment. 

The level of demand in the region (denoted $q$) is closely related to
the price offered to the end users (denoted $p$). Our
assumption is that all users in the region are price conscious.
In particular, we assume that the market is elastic with elasticity
coefficient $\ve$ greater than 1, i.e.\ 
$
q(p)=Qp^{-\ve}, 
$
for some parameter $Q$ (quantity sold at unit price).
Roughly speaking this means that a 1\%
reduction in price results in an $(\ve-1)$\% increase in revenue. 
We note that a change in the demand can reflect both a change in the
number of end users creating that demand as well as a change in the
demand per end user. 

If SP $i$ serves demand $q_i>0$ at price $p_i$ then it receives a profit
given by, $\Pi_i=q_ip_i-C_i(q_i)
=q_ip_i-(\alpha_i+\beta_iq_i)$.
We observe that the fixed cost $\alpha_i$ makes
the cost function non-convex which distinguishes our analysis from
many previous studies of network economics. In particular, 
if $\alpha_i\gg 0$ then SP $i$ may not wish to participate in the
market because even as a monopolist it is unable to make a profit. If this decision is due to the capacity (Cournot) or price
(Bertrand) offered by the other SP then we say that
SP $i$ is {\em driven out of the market}. 

\begin{lemma}
\label{l:monopoly}
Under monopoly pricing we have,
\begin{eqnarray*}
p^{mon} &=& \ve \beta/(\ve-1),~~~~~~~~q^{mon} = Q(\ve \beta/(\ve-1))^{-\ve}\\
\Pi^{mon} &=& Q\left((\frac{\ve\beta}{\ve-1})^{1-\ve}-\beta(\frac{\ve\beta}{\ve-1})^{-\ve}\right)-\alpha.
\end{eqnarray*}
assuming that $\Pi^{mon}\ge 0$. (If not then the SP stays out of the market.)
\end{lemma}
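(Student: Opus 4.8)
The plan is to treat the monopolist's problem as a one-dimensional optimization over the price $p$, exploiting the fact that the demand curve $q(p) = Qp^{-\ve}$ pins down the served quantity once the price is fixed. Substituting $q = Qp^{-\ve}$ into $\Pi = qp - (\alpha + \beta q)$ gives, for every $p > 0$,
$$\Pi(p) = Qp^{1-\ve} - \beta Q p^{-\ve} - \alpha,$$
so the only continuous decision is the choice of $p$ on the open ray $p > 0$, together with the discrete option of not entering the market at all (for which $\Pi = 0$).

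First I would locate the best interior price by differentiating. Writing $f(p) = Qp^{1-\ve} - \beta Q p^{-\ve}$ for the revenue-minus-variable-cost term—the fixed cost $\alpha$ being a constant that merely shifts the objective—a short computation gives
$$f'(p) = Qp^{-\ve-1}\bigl[\beta\ve - (\ve - 1)p\bigr].$$
Since $Qp^{-\ve-1} > 0$ and $\ve > 1$, the bracketed factor is positive for $p < \beta\ve/(\ve-1)$ and negative for $p > \beta\ve/(\ve-1)$. Hence $f$ is strictly increasing and then strictly decreasing, so it attains a unique global maximum at $p^{mon} = \ve\beta/(\ve-1)$, independent of $\alpha$ and $Q$. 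Substituting back yields $q^{mon} = Q(p^{mon})^{-\ve}$ and the stated value $\Pi^{mon} = f(p^{mon}) - \alpha$; no second-derivative test is needed, since the monotonicity already certifies a global maximizer.

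The one place requiring care is the non-convexity created by the fixed cost. Because $C_i$ jumps from $0$ to $\alpha + \beta q$ the instant $q$ becomes positive, profit viewed as a function of served quantity is discontinuous at $q = 0$, and the first-order analysis above only identifies the best strictly positive operating point. I would therefore finish by comparing the two feasible regimes: operating at $p^{mon}$ for profit $\Pi^{mon} = f(p^{mon}) - \alpha$, versus staying out for profit $0$. The monopolist participates precisely when $\Pi^{mon} \ge 0$, which is exactly the hypothesis of the lemma; when $\Pi^{mon} < 0$ the fixed cost cannot be recovered even at the revenue-optimal price, and the SP is better off serving no demand. This case split—rather than any hard calculation—is the substantive point, and it is the feature that separates the analysis from the textbook convex-cost monopoly.
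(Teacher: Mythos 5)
Your proof is correct and follows essentially the same route as the paper's: a first-order condition on the one-dimensional profit function, the only difference being that you optimize over the price $p$ while the paper differentiates with respect to the quantity $q$ (a trivial reparameterization via $q = Qp^{-\ve}$). Your additional observations---that the sign of $f'(p)$ certifies a unique global maximum, and that the fixed cost $\alpha$ forces the explicit comparison with the exit option $\Pi=0$---are slightly more careful than the paper's calculation but do not change the argument.
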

The proof (which is standard) is given in Appendix~\ref{s:monopoly}.

\section{Narrative for a single example}
\label{s:narrative}

We begin by exploring the dynamics for 
a prototypical example. In this way
we can better understand the fundamental dynamics at play
rather than getting bogged down in the general equations (which can
become somewhat complex). 
In later 
sections we consider the general case (with much of the proofs and
derivations 
deferred to the Appendix). 

For our example the unit of demand is a PB and the unit of price is
\$1M. For these units the price elasticity function has parameters $\ve=1.25$ and
$Q=1000$.  The per-unit capacity costs are $\beta_1=\$2.5M$ for SP1
and $\beta_2=\$2M$ for SP2 per petabyte (PB) of wireless capacity.  The fixed
capacity costs (representing the cost of participating in the market, e.g., 
for buying spectrum or building cell towers) are $\alpha_1
= \$50M$ and $\alpha_2 = \$100M$.
For these parameters, the monopoly price, demand and profit for each
operator is given by,
\begin{eqnarray*}
p_1^{mon}=\$12.5M,&~&q_1^{mon}=42.6PB,~\Pi_1^{mon}=\$376M\\
p_2^{mon}=\$10.0M,&~&q_2^{mon}=56.2PB,~\Pi_2^{mon}=\$350M
\end{eqnarray*}

\subsection{Network Sharing}
\label{s:narrative-sharing}

We first examine the situation under network sharing. 
This is the easiest case to consider since we do not need to worry
about the competitive dynamics between the SPs. In particular the
SPs cooperate and use the lowest cost parameters that are available,
i.e.\
$$
C^{coop}(q)=\alpha_{min}+\beta_{min}q,
$$
where $\alpha_{min}=\min\{\alpha_1,\alpha_2\}$ and
$\beta_{min}=\min\{\beta_1,\beta_2\}$. (See Figure~\ref{f:sharingcost}
for a depiction of $C^{coop}(q)$ in comparison to $C_1(q)$ and
$C_2(q)$.)
\begin{figure}[htb]
\begin{center}
\includegraphics[width = 3.in]{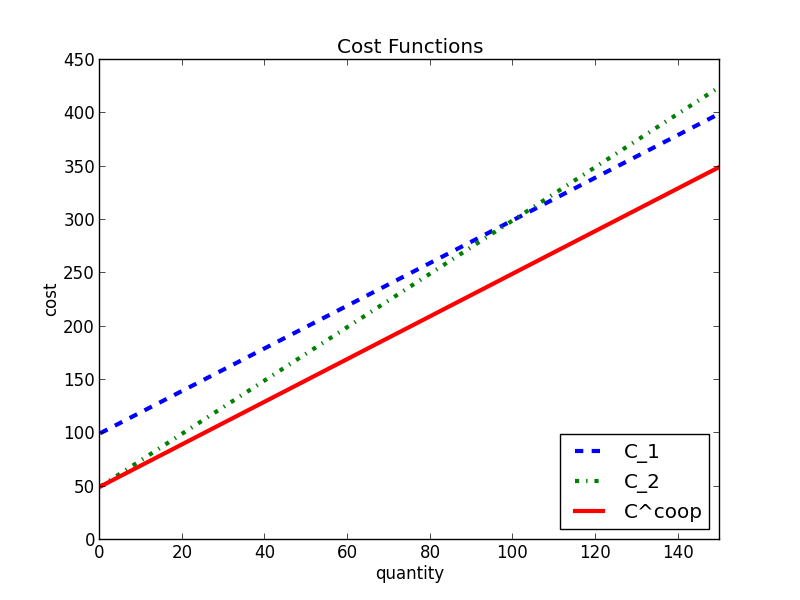}
\caption{The cost function $C^{coop}(q)$ compared against $C_1(q)$ and
  $C_2(q)$.}
\label{f:sharingcost}
\end{center}
\end{figure}

We assume that the combined entity is able to use monopoly
pricing. In this case the combined price, demand and profit for our
running example is given
by,
$$
p^{coop}=\$10.0M,~
q^{coop}=56.2PB,~
\Pi^{coop}=\$400M
$$

It remains to determine how the profit is split between the SPs. A
natural way to do this is via the Shapley value which gives to SP $i$
its expected contribution to the coalition assuming that the SPs
create the coalition in a random order. If we assume that the first SP
to enter the coalition can utilize monopoly pricing then the profits
are given by,
\begin{eqnarray*}
\Pi_1^{coop}&=&\frac{1}{2}(\Pi_1^{mon}+\Pi^{coop}-\Pi_2^{mon})=\$213M,\\
\Pi_2^{coop}&=&\frac{1}{2}(\Pi_2^{mon}+\Pi^{coop}-\Pi_1^{mon})=\$187M.
\end{eqnarray*}

In order for an SP to determine whether network sharing is the best option,
it needs to compare its profits under sharing with its profits
for the case in which it competes with the other SP.  As mentioned in
Section~\ref{s:intro}, there are many notions of competition - a
Cournot game in which the SPs offer capacity to the market and the
market sets the price, and a Bertrand game in which the SPs directly
offer prices to the market. We begin by examining the Cournot game and
defer the corresponding analysis of the Bertrand game to the
Appendix. 

\subsection{Cournot Game}
\label{s:cournot-narrative}

\begin{figure*}[htb]
\begin{center}
\includegraphics[width=2.0in]{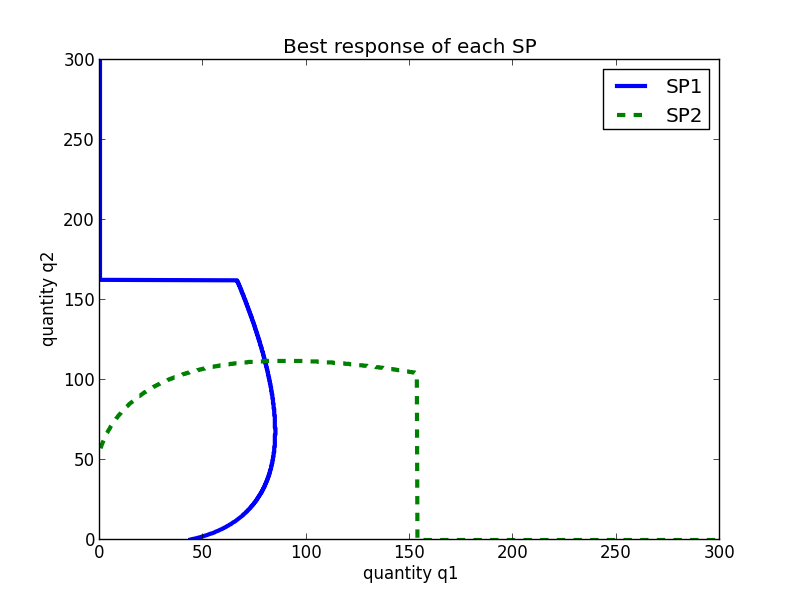}
~~
\includegraphics[width=2.0in]{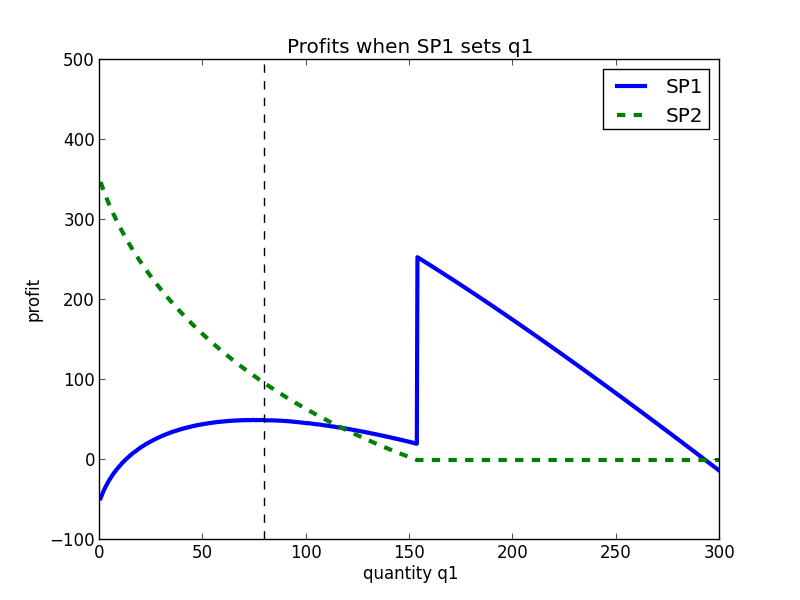}
~~
\includegraphics[width=2.0in]{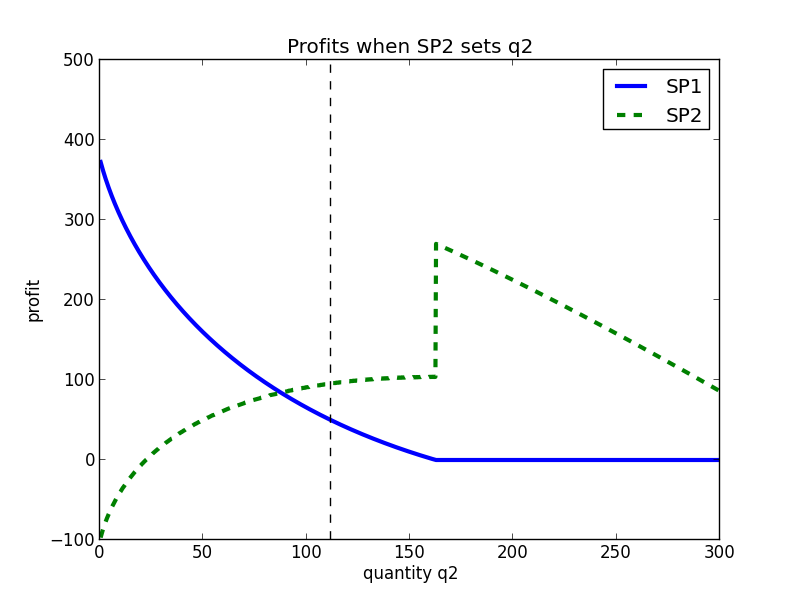}
\caption{(Left) The best response for each SP given the actions of the
  other SP. (Middle) The profits of the two SPs as a function of $q1$
  (assuming SP2 plays the best response).
(Right) The profits of the two SPs as a function of $q2$ (assuming SP1
plays the best response). The vertical lines represent the
Nash-Cournot solution ($q_1=80$, $q_2=112$). 
}
\label{f:profits-q1q2}
\end{center}
\end{figure*}
In the Cournot game SP $i$ offers to serve demand $q_i$ and then the
market determines a common price $p$ based on $q_1$ and $q_2$. In
particular,
$$
p=\left(\frac{q_1+q_2}{Q}\right)^{-1/\ve}.
$$
The main complication with the Cournot game in our setting is the
presence of the $\alpha_i$ terms that introduce a discontinuity into
the profit functions. If SP $i$ cannot generate a positive profit for 
any value $q_i>0$ then it will simply set $q_i=0$ and accept zero
profit. In this case we say that SP $i$ is {\em driven out of the
  market}. 

Figure~\ref{f:profits-q1q2} (left) illustrates the behavior of the
Cournot game. The plot shows the best response for each SP,
given the actions of the other SP. In particular, for any value $q_1>0$
(on the x-axis),
the green dashed curve represents the value of $q_2$ (on the y-axis) that
maximizes the profit of SP2 assuming that the action of SP1 is
$q_1$. 

The blue solid curve is similar and represents the best response for SP1 but with the axes flipped. 
In particular, for any value $q_2$
(on the y-axis),
the blue curve represents the value of $q_1$ (on the x-axis) that
maximizes the profit of SP1 assuming that the action of SP2 is
$q_2$. 

The crossing point of the curves at $(q_1,q_2)=(80,112)$ represents a
Nash equilibrium. (A closed form expression for this equilbrium is
presented in Appendix~\ref{s:cournot}.) In other words, if the action
of SP1 is $q_1=80$ then the optimal action of SP2 is $q_2=112$ and
if the action of SP2 is $q_2=112$ then the optimal action of SP1 is
$q_1=80$.  At this point the full set of quantity, price and profit
values is given by,
\begin{eqnarray*}
p_1^{NC}=\$3.75M,&~&q_1^{NC}=80PB,~\Pi_1^{NC}=\$50M\\
p_2^{NC}=\$3.75M,&~&q_2^{NC}=112PB,~\Pi_2^{NC}=\$96M.
\end{eqnarray*}
 
\subsection{The impact of a regulator}

As we will discuss shortly, the profits for each SP in the sharing
scenario are significantly higher than they are in the competitive
scenario. One main reason for this is that in the sharing
case the combined entity is able to set a monopoly price. The profit
increase therefore comes at the expense of the end users who have to
pay higher prices. A regulator may deem this to be anti-competitive
and as a result may impose an upper bound on the price that the
combined entity can charge if the two SPs decide to share. A natural candidate for this upper bound
is the price corresponding to the Nash Equilibrium that we computed in
the previous section. Sharing can still be
beneficial even if a regulator restricts the price because the combined
entity can take advantage of the reduced cost function. (See
Figure~\ref{f:sharingcost}.) Moreover, in contrast to the full
competitive case of Section~\ref{s:cournot-narrative} the SPs will share profits according to Shapley
value and so we have, 
\begin{eqnarray*}
p^{reg}&=&p_1^{NC}=\$3.75M,~q^{reg}=q_1^{NC}+q_2^{NC}=192PB,\\
\Pi^{reg}&=&p^{reg}q^{reg}-(\alpha_{min}+\beta_{min}q^{reg})=\$286M
\end{eqnarray*}
This combined profit is shared via Shapley value and results in,
$$
\Pi_1^{reg}=\$120M,~\Pi_2^{reg}=\$166M
$$

\subsection{Comparison of network sharing and the Cournot game}

In the table in Figure~\ref{f:simplepayofftable} we summarize the
price $p$ and the profits $\Pi_1,\Pi_2$ for the case that the SPs
compete according to the Nash Equilibrium of the Cournot game (the
so-called Nash-Cournot solution), as well as the cases of network
sharing with and without a regulator. We see that for the case of
network sharing without a regulator, both SPs have significantly
higher profit than in the Nash-Cournot solution but this is partly
because they can charge a monopoly price to the end users. In the case that a regulator enforces an upper bound on price
equal to the Nash-Cournot price, both SPs still obtain a higher profit
with network sharing than in the Nash-Cournot solution. This latter
effect comes from the fact that the SPs can share the cost of the
network and utilize the cost parameters $\alpha_{min},\beta_{min}$
rather than $\alpha_i,\beta_i$. 

\begin{figure}[htb]
\centering
\begin{tabular}{|c|c|c|}\hline 
& price per PB, $p$ & profits $(\Pi_1,\Pi_2)$ \\ \hline
Nash-Cournot&3.75&(50, 96)\\ \hline
Sharing (no regulator)&10&(213,187)\\ \hline
Sharing (regulator)&3.75&(120,166)\\ \hline
\end{tabular}
\caption{The price $p$ and the SP profits $(\Pi_1,\Pi_2)$ due to the
  Nash-Cournot solution and the sharing scenarios. All quantities are
  in units of \$1M.}
\label{f:simplepayofftable}
\end{figure}

\subsection{Aggressive and submissive strategies}
\label{s:aggressive-submissive}

We now address the more complex dynamics that can arise due to the
fact that $\alpha_i>0$. In particular, we ask whether the SPs will be
motivated to conform to the Nash-Cournot solution or whether they might be tempted to deviate from that
action. Note that both the blue and the green curves
hit zero in Figure~\ref{f:profits-q1q2} (left), i.e.\ both SPs have the
ability to drive the other one out of the market. 

Figure~\ref{f:profits-q1q2} further illustrates 
why the temptation to deviate from the Nash equilibrium might exist. 
In particular,
Figure~\ref{f:profits-q1q2} (middle) shows the profits of the two SPs given a
fixed value of $q_1$. More precisely, for each value of $q_1$, the
solid blue and dashed green curves show the profits of SP1 and SP2 respectively,
if SP2 plays its optimal response. We see that there is a big
discontinuity in the profit of SP1. At the point ($q_1=154$) at
which SP1 drives SP2 out of the market, the optimal response of SP2
jumps from a non-zero value to zero. This means there is less
capacity available which in turn drives up the price and hence the
profit of SP1. Note however that it cannot claim the monopoly profit
since it cannot reduce to the monopoly point $q_1=42.6$ without
letting SP2 back into the market. 

However, Figure~\ref{f:profits-q1q2} (right) shows that this process could work
the other way round as well. In particular,
this figure shows the profits of the two SPs given a
fixed value of $q_2$. For each value of $q_2$, the
solid blue and dashed green curves show the profits of SP1 and SP2 respectively,
if SP1 plays its optimal response. This time we see a big jump in the
profit of SP2 at the point at which it drives SP1 out of the
market. 

As a result, each SP could benefit if it sets its $q_i$ at a level
that drives the other SP out of the market {\em and the other SP
  acquiesces to being driven out}.  We can therefore model the game by
assuming that the SPs have the following discrete choices. 
\begin{itemize}
\item Nash-Cournot: In this case each SP assumes that the other SP
  will compete in the market in which case it makes sense to play the
  Nash equilibrium value. 
\item Aggression: An aggressive SP will play at a level that drives the
  other SP out of the market.
\item Submission: A submissive SP will accept being shut out of the
  market (rather than fighting it and potentially taking a loss).
\item Sharing: An SP can offer to enter a sharing arrangement with the
  other SP. However, the arrangement only goes into effect if
  both SPs agree to it. If both SPs do agree then we obtain the
  profits presented in Section~\ref{s:narrative-sharing} (that depend
  on whether or not a regulator caps the price $p$).
\end{itemize}


\begin{figure*}[htb]
\centering
\begin{tabular}{|c|c|c|c|c|c|}\hline 
& & & & Sharing & Sharing \\
(SP1 profit,SP2 profit) & Nash-Cournot & Aggression & Submission & (no
regulator) & (regulator) \\ \hline 
Nash-Cournot &(50, 96)&(-1,80)&({\em 353,0})& NA & NA\\ \hline
Aggression&(9,-1)&(-48,-17)&(254,0)&NA & NA\\ \hline
Submission&({\em 0,321})&(0,271)&(0,0)&NA & NA\\ \hline
Sharing (no regulator) &NA&NA&NA&(213,187) &NA\\ \hline
Sharing (regulator) &NA&NA&NA&NA&(120,166)\\ \hline
\end{tabular}
\caption{The profits $(\Pi_1,\Pi_2)$ due to the different strategy
  combinations. The rows represent the decisions for SP1 and the
  columns represent the decisions for SP2.}
\label{f:payofftable}
\end{figure*}
The table in Figure~\ref{f:payofftable} shows the outcomes of the resulting game. 
The columns represent the decisions for
SP1 and the rows represent the decisions for SP2. The entries have
the form (SP1 profit,
SP2 profit). 
The profits due to sharing are included in the bottom right corner of
the table. Since sharing only takes place if both parties agree to
share, there is no entry in the table if only one SP is sharing.

We remark that the italicized entries (where one SP plays the
Nash-Cournot strategy and the other plays the Submission strategy)
are not viable outcomes because the submissive SP would be better off
playing the Nash-Cournot strategy as well. 
From the above we can conclude the following. 
\begin{itemize}
\item Network sharing is better for both SPs than the Nash-Cournot solution
\item If sharing does not take place then the Nash-Cournot solution 
  is a unique Nash equilibrium since that is the only point at which
  both curves cross in Figure~\ref{f:profits-q1q2} (left). 
\item An SP might be tempted to deviate from the Nash-Cournot solution
  since if it aggressive then the best response of the
  other SP is to be submissive in which case the aggressive SP will do
  even better than sharing. (We note that this is not a Nash
  equilibrium since if one SP sets $q_i=0$ then the best response of
  the other SP is to set $q_{3-i}=q_{3-i}^{mon}$.)
\item The downside of aggression is that if both SPs are
  aggressive then they both make negative profit and hence are worse
  off (as in standard games of chicken). Whether or not an SP will
  choose to be aggressive will largely depend on how it expects the
  other SP to react. 
\end{itemize}

\section{General Analysis}
\label{s:general}

As mentioned earlier, most of our general analysis is deferred to the
Appendices. However, in the following we state
our main results for the case of network sharing compared with a
Cournot competition. 

For the case of sharing the combined price, demand and profit is given
by, 
\begin{eqnarray*}
p^{coop}&=&\ve \beta_{min}/(\ve-1)\\
q^{coop}&=&Q(\ve \beta_{min}/(\ve-1))^{-\ve}\\
\Pi^{coop}&=&p^{coop} q^{coop} - (\alpha_{min}+\beta_{min} q^{coop}),
\end{eqnarray*}
where $\alpha_{min}=\min\{\alpha_1,\alpha_2\}$ and
$\beta_{min}=\min\{\beta_1,\beta_2\}$. 
The profit is shared according to,
\begin{eqnarray}
\Pi_1^{coop}&=&\frac{1}{2}(\Pi_1^{mon}+\Pi^{coop}-\Pi_2^{mon})\label{eq:shareprofit1}\\
\Pi_2^{coop}&=&\frac{1}{2}(\Pi_2^{mon}+\Pi^{coop}-\Pi_1^{mon})\label{eq:shareprofit2},
\end{eqnarray}
where $\Pi^{mon}_1,\Pi^{mon}_2$ are the monopoly profits for SPs 1 and
2 respectively. 

For the case of the Cournot competition, for any given $q_1,q_2$ the
profits of the SPs are specified:
\begin{eqnarray}
\Pi_1(q_1,q_2)&=& \left(\frac{Q}{q_1+q_2}\right)^{1/\eps}
q_1 - \alpha_1 -\beta_1 q_1 \label{eq:prof1}\\
\Pi_2(q_1,q_2)&=& \left(\frac{Q}{q_1+q_2}\right)^{1/\eps}
q_2 - \alpha_2 -\beta_2 q_2 \label{eq:prof2}
\end{eqnarray}
We first need to calculate the best response function for each
provider. If SP2 offers to serve demand $q_2$, we show that the best
response of SP1 is to set $q_1$ so that $q_1/q_2$ is the solution to the equation,
$$
(z+1)^{1+\frac{1}{\eps}}=Az+B,
$$
where $A=(1-\frac{1}{\eps})(Q/(q_2\beta_1^{\eps}))^{1/\eps}$ and
$B=(Q/(q_2\beta_1^{\eps}))^{1/\eps}$, assuming that this leads to a
positive profit. We use $\hat{q}_1(q_2)$ to denote this solution for any fixed value of
$q_2$. The best response function for SP2 can be defined analogously. 

Now that the best response functions are in place, we can calculate
the Nash-Cournot solution which is given by,
\begin{eqnarray*}
t&=&\frac{1-\beta_2/\beta_1
  (1-1/\eps)}{\beta_2/\beta_1-(1-1/\eps)}\\ 
q_1^*&=&
Q\left(\frac{1+t(1-\frac{1}{\eps})}{\beta_2(1+t)^{1+\frac{1}{\eps}}}\right)^{\eps}\\
q_2^*&=&Q\left(
\frac{\frac{1}{t}(1-\frac{1}{\eps})+1}{\beta_1(\frac{1}{t}+1)^{1+\frac{1}{\eps}}}
\right)^{\eps}\\
\end{eqnarray*}
(We remark that in some cases this Nash Equilibrium may not exist if either
$q^*_1$ or $q^*_2$ is negative or if either of the corresponding
profits are negative.) 

As we saw in our running example, even if the Nash Equilibrium does
exist an SP may have an incentive to not
play the Nash equilibrium solution but instead try to drive the other
SP out of the market. If SP1 wishes to be aggressive in this way then
it sets $q_1=q'_1$, where
$$
q'_1=\arg\max_{q_1:\Pi_2(q_1,\hat{q}_2(q_1))\le 0}\{\Pi_1(q_1,0)\}.
$$
(A similar expression can be derived for the aggressive strategy of
SP2.) Since for the submission strategy SP $i$ simply
sets $q_i=0$, we have now defined the values of $q_1,q_2$ for the
cases of Nash-Cournot, Aggression and Submission.  For any problem
instance we can then utilize
the profit functions of Equations (\ref{eq:prof1}) and
(\ref{eq:prof2}) for each possible pair of strategies and combine the
results with the profits for sharing (given in Equations
(\ref{eq:shareprofit1}) and (\ref{eq:shareprofit2}) in order to obtain a table of the
form shown in Figure~\ref{f:payofftable}. 

\section{Conclusions}
In this paper we have presented a model to illustrate the options
facing two Service Providers who are deciding whether or not to share
network infrastructure. Our cost function has a fixed
component and hence is non-convex. We presented a taxonomy of problem
variants and derived the profit for each SP both in the case that
they share infrastructure as well as the case that they are
competitors. For the latter case the dynamics are complicated because
the fixed cost function gives an SP the option to be aggressive and
try to drive the other SP out of the market. For our running example
the profit to each SP in the case of sharing is significantly higher
than when they compete fairly according to a Nash Equilibrium, and the
profit is comparable to the case in which the SP is aggressive and the
other SP is submissive. For this example each SP would likely consider
sharing to be a better option since even if it plays agressively in a
competitive setting there is no way to ensure that the other SP would
not try to be agressive as well. 

A number of problems remain. In particular we would like to determine
how the dynamics would change if the two SPs only share a portion of
the network infrastructure. We would also like to incorporate roaming
agreements into the model since they represent a ``halfway'' point
between total competition and full cooperation. 

\bibliographystyle{abbrv}
\bibliography{sharing}

\begin{thebibliography}{10}

\bibitem{AlQahtani08}
S.~AlQahtani.
\newblock Adaptive rate scheduling for {3G} networks with shared resources
  using the generalized processor sharing performance model.
\newblock {\em Computer Communications}, 31(1):103--111, January 2008.

\bibitem{BagwellL14}
K.~Bagwell and G.~Lee.
\newblock Number of firms and price competition.
\newblock http://web.stanford.edu/\~{}kbagwell/papers/
  Bagwell\%20Lee\%20s\%20021214.pdf, 2014.

\bibitem{DengWW15}
X.~Deng, J.~Wang, and J.~Wang.
\newblock How to design a common telecom infrastructure by competitors
  individually rational and collectively optimal.
\newblock In {\em Proc. of 4th IEEE Workshop on Smart Data Pricing (at
  Infocom)}, 2015.

\bibitem{GSMA}
{GSM Association:}.
\newblock Mobile infrastructure sharing.
\newblock http://www.gsma.com/publicpolicy/wp-content/
  uploads/2012/09/Mobile-Infrastructure-sharing.pdf.

\bibitem{JanssenLS14}
T.~Janssen, R.~Litjens, and K.~W. Sowerby.
\newblock On the expiration date of spectrum sharing in mobile cellular
  networks.
\newblock In {\em 12th International Symposium on Modeling and Optimization in
  Mobile, Ad Hoc, and Wireless Networks, WiOpt 2014, Hammamet, Tunisia, May
  12-16, 2014}, pages 490--496, 2014.

\bibitem{KokkuMZR12}
R.~Kokku, R.~Mahindra, H.~Zhang, and S.~Rangarajan.
\newblock {NVS: A} substrate for virtualizing wireless resources in cellular
  networks.
\newblock {\em IEEE/ACM Transactions on Networking}, 20(5):1333--1346, October
  2012.

\bibitem{MalanchiniG15}
I.~Malanchini and M.~Gruber.
\newblock How operators can differentiate through policies when sharing small
  cells.
\newblock In {\em {IEEE} 81st Vehicular Technology Conference, {VTC} Spring
  2015, Glasgow, United Kingdom, 11-14 May, 2015}, pages 1--5, 2015.

\bibitem{MalanchiniVA14}
I.~Malanchini, S.~Valentin, and O.~Aydin.
\newblock An analysis of generalized resource sharing for multiple operators in
  cellular networks.
\newblock In {\em Proceedings of IEEE Symposium on Personal, Indoor and Mobile
  Radio Commununication (PIMRC)}, September 2014.

\bibitem{MichalakRMSJ10}
T.~Michalak, T.~Rahwan, D.~Marciniak, M.~Szamotulski, and N.~Jennings.
\newblock Computational aspects of extending the shapley value to coalitional
  games with externalities.
\newblock In {\em ECAI}, pages 197--202, 2010.

\bibitem{Myerson77}
R.~Myerson.
\newblock Values of games in partition function form.
\newblock {\em International Journal of Game Theory}, 6(1):23--31, 1977.

\bibitem{Saporiti08}
A.~Saporiti and G.~Colomaz.
\newblock Bertrand's price competition in markets with fixed costs, 2008.
\newblock University of Rochester, Working Paper No. 541.

\bibitem{Larsen12}
D.~Telekom and K.~K. Larsen.
\newblock The ultra-efficient network factory: network sharing and other means
  to leapfrog operator efficiencies.
\newblock {\em Broadband MEA}, 2012.

\bibitem{ValentinJA13}
S.~Valentin, W.~Jamil, and O.~Aydin.
\newblock Extending generalized processor sharing for multi-operator scheduling
  in cellular networks.
\newblock In {\em Proceedings of the International Wireless Communication And
  Mobile Computing Conference (IWCMC)}, July 2013.

\bibitem{Varian80}
H.~Varian.
\newblock A model of sales.
\newblock {\em American Economic Review}, 70(4):651--659, 1980.

\end{thebibliography}

\appendix
\section{Detailed analysis of the Cournot competition}
\label{s:cournot}

We now present our more detailed general analysis. We start
with the competitive setting governed by the Cournot game, in the
which SPs compete by deciding how much
demand they wish to serve. Our analysis is more complex than the
textbook Cournot analysis since the presence of the non-zero
$\alpha_i$ parameters means that each SP is faced with a decision
regarding whether or not to compete. 

In the Cournot setting the price is determined by the
aggregate demand, i.e.\
$$
p^{NC}=\left(\frac{q_1^{NC}+q_2^{NC}}{Q}\right)^{-1/\ve},
$$
We first consider a relaxation of the game in which the $q_i$
values can be negative and the profit is always given by
$\Pi_i=q_ip_i-(\alpha_i+\beta_iq_i)$, regardless of whether or not it
is positive.  In this case we assume that the quantities $q_1^{NC}$, $q_2^{NC}$ are chosen so that they form a
Nash equilibrium with respect to the SP profits.  Hence we wish to
find a solution for which $\partial \Pi_i/\partial q_i=0$ for
$i=1,2$. 
\begin{theorem}
When the SPs compete on quantity in the relaxed Nash-Cournot game, the solution
is given by, 
\begin{eqnarray*}
t&=&\frac{1-\beta_2/\beta_1
  (1-1/\eps)}{\beta_2/\beta_1-(1-1/\eps)}\\ 
q_1^*&=&
Q\left(\frac{1+t(1-\frac{1}{\eps})}{\beta_2(1+t)^{1+\frac{1}{\eps}}}\right)^{\eps}\\
q_2^*&=&Q\left(
\frac{\frac{1}{t}(1-\frac{1}{\eps})+1}{\beta_1(\frac{1}{t}+1)^{1+\frac{1}{\eps}}}
\right)^{\eps}\\
p^{NC}&=& \left(\frac{q_1^*+q_2^*}{Q}\right)^{-1/\eps}\\
\Pi_1^{NC}&=&\Pi_1(q_1^*,q_2^*)= \left(\frac{Q}{q_1^*+q_2^*}\right)^{1/\eps}
q_1^* - \alpha_1 -\beta_1 q_1^*\\
\Pi_2^{NC}&=&\Pi_2(q_1^*,q_2^*)= \left(\frac{Q}{q_1^*+q_2^*}\right)^{1/\eps}
q_2^* - \alpha_2 -\beta_2 q_2^*
\end{eqnarray*}
\end{theorem}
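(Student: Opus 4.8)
The plan is to solve directly the system of two stationarity conditions $\partial\Pi_i/\partial q_i=0$ that defines the relaxed equilibrium. Writing the market price as a function of aggregate demand, $p=Q^{1/\eps}(q_1+q_2)^{-1/\eps}$, each profit reads $\Pi_i=Q^{1/\eps}(q_1+q_2)^{-1/\eps}q_i-\alpha_i-\beta_i q_i$. Differentiating in $q_i$ (the fixed term $\alpha_i$ is constant in $q_i$ and drops out) and simplifying yields the pair of conditions
\begin{equation*}
p\Bigl(1-\frac{q_i}{\eps(q_1+q_2)}\Bigr)=\beta_i,\qquad i=1,2,
\end{equation*}
i.e.\ the usual Cournot ``marginal revenue equals marginal cost'' equations specialized to our elasticity model.

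The key idea is to eliminate both the price $p$ and the overall scale by passing to the ratio $t=q_2/q_1$. With $S=q_1+q_2$ we have $q_1/S=1/(1+t)$ and $q_2/S=t/(1+t)$, so the two conditions become $p\,[\,t+(1-1/\eps)\,]/(1+t)=\beta_1$ and $p\,[\,1+t(1-1/\eps)\,]/(1+t)=\beta_2$. Dividing the second by the first cancels both $p$ and $(1+t)$, leaving $\beta_2/\beta_1=[1+t(1-1/\eps)]/[t+(1-1/\eps)]$, which is \emph{linear} in $t$. Solving gives exactly the claimed
\begin{equation*}
t=\frac{1-(\beta_2/\beta_1)(1-1/\eps)}{\beta_2/\beta_1-(1-1/\eps)}.
\end{equation*}

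With $t$ in hand I would recover the quantities by back-substitution. From the second condition, $p=\beta_2(1+t)/[1+t(1-1/\eps)]$, and inverting the elasticity relation gives $S=Qp^{-\eps}$. Splitting via $q_1=S/(1+t)$ and collecting the exponent $(1+t)^{\eps+1}=\bigl((1+t)^{1+1/\eps}\bigr)^{\eps}$ produces $q_1^*=Q\bigl([1+t(1-1/\eps)]/[\beta_2(1+t)^{1+1/\eps}]\bigr)^{\eps}$. The expression for $q_2^*$ then requires no further computation: interchanging the two players corresponds to $\beta_2\mapsto\beta_1$ together with $t=q_2/q_1\mapsto 1/t$, and applying this relabeling to $q_1^*$ gives precisely the stated $q_2^*$. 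Finally $p^{NC}$, $\Pi_1^{NC}$, and $\Pi_2^{NC}$ follow immediately by substituting $q_1^*,q_2^*$ into $p=\bigl((q_1+q_2)/Q\bigr)^{-1/\eps}$ and into the profit functions~(\ref{eq:prof1})--(\ref{eq:prof2}).

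I expect no deep obstacle here: the argument is stationarity plus algebra, and the one genuinely clarifying move is dividing the two first-order conditions, which simultaneously removes the price and the scale and collapses everything to a single linear equation in the ratio $t$. The only points demanding care are the exponent bookkeeping ($1+1/\eps$ versus $\eps+1$) when splitting $S$ into $q_1^*$ and $q_2^*$, and the observation that the \emph{relaxed} framing---permitting negative $q_i$ and not enforcing nonnegative profit---is exactly what allows us to identify the stationary point with the equilibrium without separately verifying second-order or boundary conditions; the surrounding remark in the text already flags that the true equilibrium may fail to exist when $q_1^*$, $q_2^*$, or the resulting profits come out negative.
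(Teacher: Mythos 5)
Your proposal is correct and follows essentially the same route as the paper's proof: both set the two first-order conditions $\partial\Pi_i/\partial q_i=0$, divide them to eliminate price and scale and obtain a linear equation in the ratio $t=q_2/q_1$, and then back-substitute into one stationarity condition to recover $q_1^*$, $q_2^*$, and the resulting price and profits. The only cosmetic difference is that the paper also records the best-response functions $\hat{q}_1(q_2),\hat{q}_2(q_1)$ along the way (which it reuses later for the aggression analysis), whereas you pass directly to the ratio.
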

\begin{proof}
For ease of notation we drop the superscript $NC$ in this analysis. Define $S:=q_1+q_2$.
The solution is given by: 
\begin{equation}
\partial \Pi_1(q_1,q_2)/\partial q_1 = 0 \,, 
\end{equation}
that is, 
\begin{equation}
\label{eq:2N_1}
S^{-(1+(1/\eps))}(S-q_1/\eps) = \beta_1 Q^{-1/\eps} \,,
\end{equation}
and
\begin{equation}
\partial \Pi_2(q_1,q_2)/\partial q_2 = 0 \,, 
\end{equation}
that is, 
\begin{equation}
\label{eq:2N_2}
S^{-(1+(1/\eps))}(S-q_2/\eps) = \beta_2 Q^{-1/\eps} \,. 
\end{equation}
For any fixed $q_2$, we can solve Equation~\ref{eq:2N_1} according to:
\begin{eqnarray*}
(q_1+q_2)^{-(1+\frac{1}{\eps})}(q_1(1-\frac{1}{\eps})+q_2)&=&\beta_1Q^{-\frac{1}{\eps}}\\
\Leftrightarrow (q_1+q_2)^{1+\eps}(q_1(1-\frac{1}{\eps})+q_2)^{-\eps}&=&\beta_1^{-\eps}Q\\
\Leftrightarrow
q_2^{1+\eps}(\frac{q_1}{q_2}+1)^{1+\eps}q_2^{-\eps}(\frac{q_1}{q_2}(1-\frac{1}{\eps})+1)^{-\eps}&=&\beta_1^{-\eps}Q\\
\Leftrightarrow 
\left(
\frac{(\frac{q_1}{q_2}+1)^{1+\frac{1}{\eps}}}{\frac{q_1}{q_2}(1-\frac{1}{\eps})+1}
\right)&=&\left(\frac{Q}{q_2\beta_1^{\eps}}\right)^{1/\eps}.
\end{eqnarray*}
In other words, $q_1/q_2$ is the solution to the equation,
$$
(z+1)^{1+\frac{1}{\eps}}=Az+B,
$$
where $A=(1-\frac{1}{\eps})(Q/(q_2\beta_1^{\eps}))^{1/\eps}$ and $B=(Q/(q_2\beta_1^{\eps}))^{1/\eps}$.
We use $\hat{q}_1(q_2)$ to denote this solution for any fixed value of
$q_2$. 

For any fixed $q_1$, we can solve Equation~\ref{eq:2N_2} according to:
\begin{eqnarray*}
(q_1+q_2)^{-(1+\frac{1}{\eps})}(q_1+q_2(1-\frac{1}{\eps}))&=&\beta_2Q^{-\frac{1}{\eps}}\\
\Leftrightarrow (q_1+q_2)^{1+\eps}(q_1+q_2(1-\frac{1}{\eps}))^{-\eps}&=&\beta_2^{-\eps}Q\\
\Leftrightarrow
q_1^{1+\eps}(1+\frac{q_2}{q_1})^{1+\eps}q_1^{-\eps}(1+\frac{q_2}{q_1}(1-\frac{1}{\eps}))^{-\eps}&=&\beta_2^{-\eps}Q\\
\Leftrightarrow 
\left(
\frac{(1+\frac{q_2}{q_1})^{1+\frac{1}{\eps}}}{1+\frac{q_2}{q_1}(1-\frac{1}{\eps})}
\right)&=&\left(\frac{Q}{q_1\beta_2^{\eps}}\right)^{1/\eps}.
\end{eqnarray*}
In other words, $q_2/q_1$ is the unique positive solution to the equation,
$$
(1+z)^{1+\frac{1}{\eps}}=Az+B,
$$
where $A=(1-\frac{1}{\eps})(Q/(q_1\beta_2^{\eps}))^{1/\eps}$ and $B=(Q/(q_1\beta_2^{\eps}))^{1/\eps}$.
We use $\hat{q}_2(q_1)$ to denote this solution for any fixed value of
$q_1$. 

Suppose we now want to solve both Equations~\ref{eq:2N_1} and
\ref{eq:2N_2} simultaneously. By dividing the two equations we have 
\begin{eqnarray*}
\beta_2(S-\frac{q_1}{\eps}) &=& \beta_1(S-\frac{q_2}{\eps})\\ 
\Rightarrow \frac{\beta_2}{\beta_1}(q_1(1-\frac{1}{\eps})+q_2) &=&
q_1+q_2(1-\frac{1}{\eps})\\
\Rightarrow (1-\frac{\beta_2}{\beta_1}(1-\frac{1}{\eps}))q_1&=&(\frac{\beta_2}{\beta_1}-(1-\frac{1}{\eps}))q_2.
\end{eqnarray*}
For $\beta_2/\beta_1 \neq (1-1/\eps)$, we have
\begin{equation}
q_2 = \frac{1- (\beta_2/\beta_1) (1-1/\eps)}{\beta_2/\beta_1-(1-1/\eps)}q_1
\,,
\end{equation}
and so $t:=q_2/q_1=\frac{1-\beta_2/\beta_1
  (1-1/\eps)}{\beta_2/\beta_1-(1-1/\eps)}$, 
which can be calculated by knowing the parameters $\eps$,
$\beta_1,\beta_2$. 
If we let $(q_1^*,q_2^*)$ be the solution to the simultaneous equations
then it follows,
\begin{equation}
\label{eq:q1nc}
q_1^*=
Q\left(\frac{1+t(1-\frac{1}{\eps})}{\beta_2(1+t)^{1+\frac{1}{\eps}}}\right)^{\eps}
\,,
\end{equation}
and
\begin{equation}
\label{eq:q2nc}
q_2^*=Q\left(
\frac{\frac{1}{t}(1-\frac{1}{\eps})+1}{\beta_1(\frac{1}{t}+1)^{1+\frac{1}{\eps}}}
\right)^{\eps}
\,.
\end{equation}
(The ratio $q_2^*/q_1^*$ indeed equals $t$). The corresponding profits
are given by,
\begin{equation}
\Pi_1^{NC}:=\Pi_1(q_1^*,q_2^*)= \left(\frac{Q}{q_1^*+q_2^*}\right)^{1/\eps}
q_1^* - \alpha_1 -\beta_1 q_1^*\,,
\end{equation}
and
\begin{equation}
\Pi_2^{NC}:=\Pi_2(q_1^*,q_2^*)= \left(\frac{Q}{q_1^*+q_2^*}\right)^{1/\eps}
q_2^* - \alpha_2 -\beta_2 q_2^*\,.
\end{equation}
\end{proof}

We now consider the real, i.e.\ non-relaxed problem and derive the
conditions under which the above Nash equilibrium is a valid
solution. This is the case if $q_1^*$, $q_2^*$, $\Pi_1(q_1^*,q_2^*)$
and $\Pi_2(q_1^*,q_2^*)$ are all non-negative. From the above analysis
we immediately have,
\begin{lemma}
\label{l:nash-viable}
The pair $(q_1^*,q_2^*)$ is a viable solution to the original problem
if and only if,
\begin{eqnarray*}
(1-1/\eps) &\le& \min\{\beta_1/\beta_2,\beta_2/\beta_1\},\\
\alpha_1&\le&\left(\frac{Q}{q_1^*+q_2^*}\right)^{1/\eps}q_1^*-\beta_1 q_1^*,\\
\alpha_2&\le&\left(\frac{Q}{q_1^*+q_2^*}\right)^{1/\eps}q_2^*-\beta_2 q_2^*.
\end{eqnarray*}
\end{lemma}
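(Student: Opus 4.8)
The plan is to exploit the fact that the relaxed game of the theorem differs from the original game only in two respects: the original constrains each $q_i \ge 0$, and it counts an SP as participating only when its profit is nonnegative. Consequently the relaxed equilibrium $(q_1^*,q_2^*)$ is a genuine solution of the original problem exactly when the four quantities $q_1^*$, $q_2^*$, $\Pi_1(q_1^*,q_2^*)$, and $\Pi_2(q_1^*,q_2^*)$ are all nonnegative. Two of the three stated inequalities are then immediate: rearranging $\Pi_1(q_1^*,q_2^*)\ge 0$ with the profit formula from the theorem gives precisely the second inequality, and $\Pi_2(q_1^*,q_2^*)\ge 0$ gives the third. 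So the only substantive step is to show that the pair of constraints $q_1^*\ge 0$ and $q_2^*\ge 0$ is together equivalent to the first inequality $(1-1/\eps)\le\min\{\beta_1/\beta_2,\beta_2/\beta_1\}$.

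For that I would reduce everything to the single ratio $t=q_2^*/q_1^*$ already computed in the theorem. Writing $r=\beta_2/\beta_1$ and $c=1-1/\eps$ (so $c\in(0,1)$ since $\eps>1$), we have $t=(1-rc)/(r-c)$. In the closed forms of Equations~(\ref{eq:q1nc}) and (\ref{eq:q2nc}), the quantities $Q$, $\beta_1$, $\beta_2$, and $(1+t)^{1+1/\eps}$ are positive, so the sign of $q_1^*$ is governed by the numerator $1+t(1-1/\eps)$ and that of $q_2^*$ by $(1/t)(1-1/\eps)+1$. These numerators are positive exactly when $t>0$, and conversely $t>0$ forces both $q_1^*>0$ and $q_2^*>0$; hence the nonnegativity of the equilibrium quantities collapses to the single condition $t\ge 0$.

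It then remains to carry out the sign analysis of $t=(1-rc)/(r-c)$. Because $c<1<1/c$, the branch (numerator $\le 0$, denominator $<0$) is vacuous, so $t\ge 0$ holds iff $1-rc\ge 0$ and $r-c>0$, i.e.\ $c\le r\le 1/c$ (with $t=0$ at the upper endpoint, and the lower endpoint $r=c$ being the singular case $\beta_2/\beta_1=1-1/\eps$ already excluded in the theorem). The bounds $c\le r$ and $r\le 1/c$ read $\beta_2/\beta_1\ge 1-1/\eps$ and $\beta_1/\beta_2\ge 1-1/\eps$, which combine to exactly $(1-1/\eps)\le\min\{\beta_1/\beta_2,\beta_2/\beta_1\}$, as claimed.

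I do not expect a serious obstacle here, which is consistent with the phrase ``we immediately have.'' The one point requiring care is bookkeeping at the boundaries: matching the strict-versus-nonstrict conventions and the singular denominator to the stated $\ge$ inequalities, and verifying that nonnegativity of \emph{both} equilibrium quantities is genuinely captured by the single parameter $t$ rather than needing separate arguments for $q_1^*$ and $q_2^*$.
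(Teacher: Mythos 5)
Your proof is correct and is exactly the argument the paper leaves implicit behind the phrase ``From the above analysis we immediately have'': viability means nonnegativity of $q_1^*$, $q_2^*$, $\Pi_1(q_1^*,q_2^*)$, $\Pi_2(q_1^*,q_2^*)$, the two profit conditions rearrange directly into the $\alpha_i$ inequalities, and the sign analysis of $t=(1-rc)/(r-c)$ with $r=\beta_2/\beta_1$, $c=1-1/\eps$ yields the first inequality. One small imprecision: the numerator $1+tc$ can remain positive for some $t<0$, so the statement that the numerators are positive ``exactly when $t>0$'' is not quite right; the clean justification that both quantities are nonnegative iff $t\ge 0$ is simply that $t$ is their ratio, which you also invoke.
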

Suppose that the conditions of Lemma~\ref{l:nash-viable} do not hold. Note that due to
the unimodal nature of the profit curve, if $q_i^*$ is negative then
SP $i$ would be better off setting $q_i=0$. Also, if $\Pi_i^{NC}$ is
negative then SP $i$ would be better off setting $q_i=0$. In each of
these cases we say that SP $i$ is {\em driven out of the
  market}. 

However, as we first discussed in
Section~\ref{s:aggressive-submissive}, even if the Nash equilibrium is
a viable solution an SP may still be incentivized to try and drive
the other SP out of the market. 
other SP out of the market and so we now investigate that situation in
detail. In particular let, 
$$
q'_1=\arg\max_{q_1:\Pi_2(q_1,\hat{q}_2(q_1))\le 0}\{\Pi_1(q_1,0)\}.
$$
In other words, let $q'_1$ be the value of $q_1$ that maximizes the
profit of SP1 assuming that it can drive SP2 out of the market
even if SP2 gives the best response. 
Similarly let, 
$$
q'_2=\arg\max_{q_2:\Pi_1(\hat{q}_1(q_2),q_2)\le 0}\{\Pi_2(0,q_2)\}.
$$
With these formulas in place, SP1 chooses from the following three options.
\begin{itemize}
\item Option 1 (Nash-Cournot). Playing value $q_1^*$ under the assumption that SP2
  plays value $q_2^*$. This option is viable if all of the quantities,
  $q_1^*$, $q_2^*$, $\Pi_1(q_1^*,q_2^*)$ and $\Pi_2(q_1^*,q_2^*)$ are
  non-negative, i.e.\ if the conditions of Lemma~\ref{l:nash-viable}
  hold.  
\item Option 2 (Aggression). Playing value $q'_1$ with the expectation that SP2
  plays value $0$, (i.e.\ it does not participate in the market). 
\item Option 3 (Submission). Playing value $0$, i.e.\ not participating in the
  market. 
\end{itemize}
SP2 is faced with an analogous set of options and so we obtain the
following profit table in Figure~\ref{f:payofftable-gen} that is a general
version of the first three columns and rows in Figure~\ref{f:payofftable}. 
\begin{figure*}[htb]
\centering
\begin{tabular}{|c|c|c|c|}\hline 
& Nash-Cournot & Aggression & Submission \\ \hline
Nash-Cournot&$(\Pi_1(q^*_1,q^*_2),\Pi_2(q^*_1,q^*_2))$
&$(\Pi_1(q^*_1,q'_2),\Pi_2(q^*_1,q'_2))$
&$(\Pi_1(q^*_1,0),0)$\\ \hline
Aggression&$(\Pi_1(q'_1,q^*_2),\Pi_2(q'_1,q^*_2)$
&$(\Pi_1(q'_1,q'_2),\Pi_2(q'_1,q'_2))$
&$(\Pi_1(q'_1,0),0)$ \\ \hline
Submission&$(0,\Pi_2(0,q^*_2))$
&$(0,\Pi_2(0,q'_2))$
&$(0,0)$\\ \hline
\end{tabular}
\caption{The profits $(\Pi_1,\Pi_2)$ due to the different strategy
  combinations. The rows represent the decisions for SP1 and the
  columns represent the decisions for SP2.}
\label{f:payofftable-gen}
\end{figure*}

\section{Network Sharing}
\label{s:sharing}

We now examine the situation under network sharing. In this case the
SPs cooperate and use the lowest cost parameters that are available,
i.e.\
$$
C^{coop}(q)=\alpha_{min}+\beta_{min}q,
$$
where $\alpha_{min}=\min\{\alpha_1,\alpha_2\}$ and
$\beta_{min}=\min\{\beta_1,\beta_2\}$.

We first assume that the combined entity is able to use monopoly
pricing. In this case the combined price, demand and profit is given
by,
\begin{eqnarray*}
p^{coop}&=&\ve \beta_{min}/(\ve-1)\\
q^{coop}&=&Q(\ve \beta_{min}/(\ve-1))^{-\ve}\\
\Pi^{coop}&=&p^{coop} q^{coop} - (\alpha_{min}+\beta_{min} q^{coop})\\
\end{eqnarray*}

It remains to determine how the profit is split between the SPs. A
natural way to do this is via the Shapley value which gives to SP $i$
its expected contribution to the coalition assuming that the SPs
create the coalition in a random order. There are two ways to
calculate this number depending on whether we incorporate {\em
  externalities}~\cite{MichalakRMSJ10,Myerson77} from outside the coalition. More precisely, when SP
$i$ is the first member of the coalition, we can either assume that it can
utilize monopoly pricing or we can assume that it still has to
compete against the other SP according to the Nash-Cournot game. The
former case might be more appropriate in a rural setting in which
an SP that enters the market late is unlikely to participate in the
market unless it can share. The latter case
might be more appropriate in an urban situation in which both SPs feel
compelled to enter the market regardless of whether or not they can
share. 
In
the first case we get,
\begin{eqnarray*}
\Pi_1^{coop}&=&\frac{1}{2}(\Pi_1^{mon}+\Pi^{coop}-\Pi_2^{mon})\\
\Pi_2^{coop}&=&\frac{1}{2}(\Pi_2^{mon}+\Pi^{coop}-\Pi_1^{mon}).
\end{eqnarray*}
In the latter case we get,
\begin{eqnarray*}
\Pi_1^{coop}&=&\frac{1}{2}(\Pi_1^{NC}+\Pi^{coop}-\Pi_2^{NC})\\
\Pi_2^{coop}&=&\frac{1}{2}(\Pi_2^{NC}+\Pi^{coop}-\Pi_1^{NC}).
\end{eqnarray*}
(In Figure~\ref{f:hierarchy} these cases make up the 
{\bf ``Cooperation$\rightarrow$No regulator$\rightarrow$Profit shared via
Shapley allocation''} branches.)

From a regulator's point of view, the downside of sharing under
monopoly pricing is that the price is significantly higher than the competitive
case. We now look at an alternative framework in which the price 
is restricted by a regulator to be the same as in the Nash-Cournot
game. In this setting the solution becomes,
\begin{eqnarray*}
p^{coop}&=&p^{NC}\\
q^{coop}&=&Q(p^{NC})^{-\ve}\\
\Pi^{coop}&=&q^{coop}p^{coop}-(\alpha_{min}+\beta_{min}q^{coop}).
\end{eqnarray*}
Now when we consider the Shapley value without externalities, it only
makes sense to assume that the price is constrained to be the
regulated price, regardless of the size of the coalition. Hence in all
cases the price and demand are the same and so the only difference
between the coalitions is the cost. 
\begin{eqnarray*}
\Pi_1^{coop}&=&\frac{1}{2}(q^{coop}p^{coop}-(\alpha_1+\alpha_{min}-\alpha_2)\\&&-(\beta_1+\beta_{min}-\beta_2)q^{coop})\\
\Pi_2^{coop}&=&\frac{1}{2}(q^{coop}p^{coop}-(\alpha_2+\alpha_{min}-\alpha_1)\\&&-(\beta_2+\beta_{min}-\beta_1)q^{coop})
\end{eqnarray*}
For the Shapley value with externalities we have,
\begin{eqnarray*}
\Pi_1^{coop}&=&\frac{1}{2}((q_1^{NC}+q^{coop}-q_2^{NC})p^{coop}-(\alpha_1+\alpha_{min}-\alpha_2)\\&&-(\beta_1q_1^{NC}+\beta_{min}q^{coop}-\beta_2q_2^{NC}))\\
\Pi_2^{coop}&=&\frac{1}{2}((q_2^{NC}+q^{coop}-q_1^{NC})p^{coop}-(\alpha_2+\alpha_{min}-\alpha_1)\\&&-(\beta_2q_2^{NC}+\beta_{min}q^{coop}-\beta_1q_1^{NC}))
\end{eqnarray*}

\section{Price competition and breaking the ``Bertrand curse''}
\label{s:bertrand-single}
The results of Appendix~\ref{s:cournot} were for the Cournot notion of
competition by offered capacity. Each SP decides to serve demand $q_i$ and
then the price is determined by the total capacity $q_1+q_2$. The main
alternative notion of competition is a {\em Bertrand} competition in
which the providers offer a price to the market. In this section we
examine how a price-based competition operates under various notions of price-sensitivity for the
end users. We start with the outcome of our running example for the
case of a Bertrand competition. 

\subsection{Bertrand analysis for the running example.}

Recall our example from Section~\ref{s:narrative} in which the price elasticity function has parameters $\ve=1.25$ and
$Q=1000$.  The per-unit capacity costs are $\beta_1=\$2.5M$ for SP1
and $\beta_2=\$2M$ for SP2 and the fixed
capacity costs 
are $\alpha_1 = \$50M$ and $\alpha_2 = \$100M$.

We now consider the dynamics of this example under the simplest type of Bertrand game. If the
SPs offer different prices then all the demand goes to the one with
the lowest price. If the two SPs offer the same price then the demand
is split between them. The value of the lowest price determines the
amount of demand in the market. 
\begin{figure*}[htb]
\begin{center}
\includegraphics[width=2.6in]{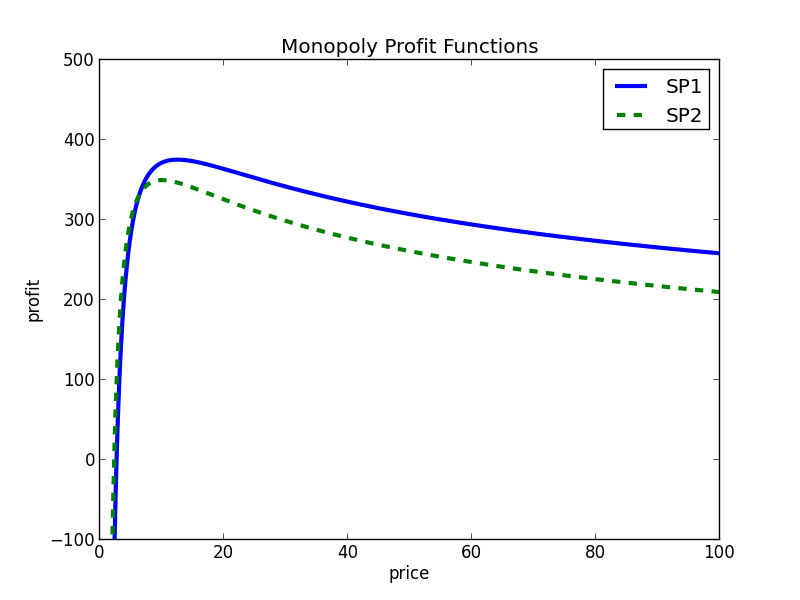}
~~
\includegraphics[width=2.6in]{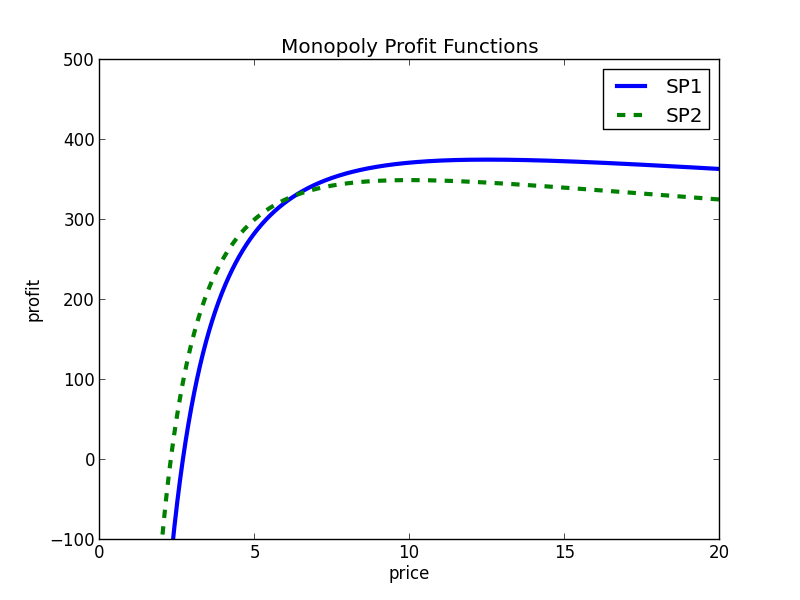}
\caption{(Left) The profit for the two SPs as a function of the price $p$. Here $\alpha_1=\$50M$,
  $\beta_1=\$2.5M$, $\alpha_2=\$100M$ and $\beta_2=\$2M$. (Right) A
  blown up version of the figure.}
\label{f:monopolyqp}
\end{center}
\end{figure*}
In Figure~\ref{f:monopolyqp} (left) we show the monopoly profit function for these
parameters as a function of price, $p$. We denote these functions by $\Pi_1^{mon}(p)$
and $\Pi_2^{mon}(p)$.  
Figure~\ref{f:monopolyqp} (right) shows a blown up version of
the figure.

We see from the figure that $\Pi_1^{mon}(p)\le 0$ for $p\le 2.68$ and
$\Pi_2^{mon}(p)\le 0$ for $p\le 2.28$. Hence if $p_1\ge 2.68$ then SP2 can always claim all the demand and
gain a positive profit by setting $p_2=\frac{1}{2}(p_1+2.28)$. As a
result, a natural strategy for SP2 is to set $p_2$ slightly under
2.68 (e.g.\ at $2.67$). This effectively drives SP1 out of the market since there is no
setting for $p_1$ that would allow it to claim nonzero demand and make
a positive profit. The resulting solution is:
\begin{eqnarray*}
p_1=\$2.68M,&~&q_1=0PB,~\Pi_1=\$0M\\
p_2=\$2.67M,&~&q_2=293PB,~\Pi_2=\$96.3M.
\end{eqnarray*}

We now comment on the difference between the Bertrand and Cournot
results and how they compare to the sharing scenario. 
\begin{itemize}
\item In the Cournot game, each SP has the ability to drive the
  other out of the market. In the Bertrand game, only SP2 can
  do that. The reason for this asymmetry in the Bertrand game is that
  SP2 has a lower variable cost, i.e. $\beta_2<\beta_1$. (For example,
  this might be due to SP2 having a lower cost for deploying capacity.) Hence
  SP2 can make a profit at a lower price than SP1. 
\item If competition is modeled according to a Bertrand game, SP2 is better off sharing with SP1 than
  driving SP1 out of the market. In contrast, if competition is
  modeled according to a Cournot game, both SPs can do better 
  than sharing if they are able to drive the other SP out of the
  market. 
\item With the Bertrand game, if SP2 drives SP1 out of the market,
  there is no action of SP1 that would cause SP2 to have a negative
  profit. In contrast, in the Cournot game if one SP is aggressive and
  tries to drive the other out of the market, the aggressive SP is in
  danger of making a negative profit if the other SP refuses to be
  submissive and also decides to act aggressively.
\end{itemize}

We now provide a general analysis of the Bertrand game.
In Section~\ref{s:bertrand1} we examine the basic Bertrand model 
and then in
Section~\ref{s:bertrand-varian} we show how the results change when
not all users are price sensitive. This type of model has been
considered as one way of breaking the ``Bertrand curse'' which occurs
when both providers have the same costs and so neither can achieve a profit. 
In Section~\ref{s:sharing-price-competition} we 
consider another extension of the basic model that is motivated by
potential actions of a regulator.  
In this extension (that is a combination
of the sharing model and the Bertrand game) the SPs are
allowed to share costs but the regulator enforces that they must still compete on price. 

\subsection{Bertrand model 1: all end users are price sensitive}
\label{s:bertrand1}
In this simplest case we assume that all demand goes to the SP that
offers the lowest price and if both SPs offer the same price then
the demand is split between them. Suppose that the $\alpha_i,\beta_i$
parameters are fixed.
\begin{itemize}
\item $\alpha_i\ge
  Q((\ve\beta_i/(\ve-1))^{1-\ve}-\beta_i^{1-\ve}(\ve/(\ve-1))^{-\ve})$
  for $i=1,2$. 
In this case a Nash equilibrium is for both SPs to stay out of the
market and keep a profit of zero. This is because neither can attain a
positive profit, even if they can act as a monopoly. 
\item If the above condition does not hold for SP $i$, let
  $\underline{p}_i=\min\{p:Qp^{1-\ve}-(\alpha_i+\beta_iQp^{-\ve})\ge
  0\}$ and let 
  $\bar{p}_i=\max\{p:Qp^{1-\ve}-(\alpha_i+\beta_iQp^{-\ve})\ge
  0\}$. Suppose without loss of generality that $\underline{p}_1\ge
  \underline{p}_2$. 
  If $\bar{p}_2\le \underline{p}_1$ then a Nash equilibrium is for
  SP2 to set price $p_2=p_2^{mon}$ and for SP1 to stay out of the
  market. 
\item If $\bar{p}_2> \underline{p}_1$ and  $\underline{p}_2<
  \underline{p}_1$ (i.e.\ with strict inequality) then a natural 
  solution is for SP2 to set price
  $p_2=\min{p_2^{mon},\underline{p}_1}$ and for SP1 to stay out of the
  market. Note that this is not a Nash equilibrium in the strict sense
  since if SP1 does not participate then the optimal action of SP2
  is to set its price to its monopoly price. However, if SP2 did that
  then SP1 could potentially get back into the market. Hence a more
  natural course of action for SP2 is to set its price to the best
  price that keeps SP1 out of the market. 
\item If $\underline{p}_2=\underline{p}_1$ then it is not hard to see
  that the only Nash equilibrium is for both SPs to set price
  $p_i=\underline{p}_i$ which gives them zero profit. This is an
  example of the 
  so-called {\em Bertrand curse} in which either one provider is driven
  completely out of the market or else both providers make zero
  profit. 
\end{itemize}


\subsection{Bertrand model 2: not all end users are price sensitive}
\label{s:bertrand-varian}
The Bertrand curse is generally viewed as a an undesirable state of
affairs and so there has been much research on methods to avoid it. We
now consider how our results change in a framework of Bagwell and Lee~\cite{BagwellL14}
that was inspired by earlier work Varian~\cite{Varian80}. 
In this model we assume that a fraction $I$ of the end users (the ``informed''
users) are price sensitive and a fraction $U=1-I$ (the ``uninformed''
users) are price insensitive. However, each end user still generates
demand based on price according to the price elasticity function. 
Let,
\begin{eqnarray*}
\Pi_i^+(p)&=&(I+\frac{U}{2})pq(p)-(\alpha_i+\beta_iq(p))\\
\Pi_i^-(p)&=&\frac{U}{2}pq(p)-(\alpha_i+\beta_iq(p))
\end{eqnarray*}
In other words, $\Pi_i^+(p)$ is the profit function when SP $i$ has
the low price and $\Pi_i^-(p)$ is the profit function when SP $i$ has
the high price. 
Let $[\underline{p}^-_i,\bar{p}^-_i]$ be the price range on which
$\Pi^-_i$ is non-negative and let $[\underline{p}^+_i,\bar{p}^+_i]$ be the price range on which
$\Pi^+_i$ is non-negative. We assume without loss of generality that
$\underline{p}^+_2\le \underline{p}^+_1$. 

For this case we claim that the following is a stable solution. SP1
sets its price $p_1$ so as to maximize $\Pi^-_1$. Now let
$\hat{p}_1\le p_1$
be such that $\Pi^+_1(\hat{p}_1)=\Pi^-_1(p_1)$. SP2 sets its price
$p_2=\min\{\hat{p}_1,\arg\max\Pi^+_2(p)\}$.

We show that this solution is a stable solution in the following
sense. First of all, given the price offered by SP2, SP1 cannot
improve its profit with any other price. Hence it satisfies the
property of a Nash equilbrium from the perspective of SP1. It does
not satsify the property of a Nash equilibrim from the perspective of
SP2 since SP2 could potentially increase its profit if SP1
keeps its price fixed. However, if SP2 did that then SP1 could
choose a new price in which SP1 does better and SP2 does
worse. Another way to look at this is that these prices form a subgame
perfect Nash equilibrium for the 
Stackelberg game in which SP2 sets its price first and then SP1
follows. 


\subsection{Sharing on cost with price competition}
\label{s:sharing-price-competition}
When considering the benefits of sharing in the context of a Bertrand
competition, one model of sharing would be exactly as was considered
before in the context of a Nash Cournot competition. The service
providers provide capacity based on the minimum of their costs and
then calculate a monopoly price with respect to those costs. The above
notion of a Bertrand competition with insensitive users gives rise to
another notion of sharing that might be more appealing to a regulator.
In particular, the SPs are allowed to cooperate on cost when building
capacity. However, when offering service to end users they must still
compete on price. This gives rise to a Bertrand competition in which
each SP has parameters $\alpha_{\min}=\min\{\alpha_1,\alpha_2\}$ and
$\beta_{\min}=\min\{\beta_1,\beta_2\}$. In the case of a Bertrand competition in
which all users are price sensitive, both SPs would offer a price
$p=\beta_{\min}$ and so neither would generate a profit. However, for the
case in which some users are price insensitive, there is a stable
situation as described above in which one SP offers a low price in
order to get all the price sensitive users while the other one offers
a high price in order to get all the price insensitive users. 
%

\section{Multiple Geographic Regions}
\label{s:narrative-multiple}

One of the main reasons that
regulators allow network sharing even though it leads to loss of
competition is that it allows service providers to more quickly offer
service over a large geographic region. In order to investigate this
phenomenon, we now show how our analysis extends when it is not the
case that both SPs can offer service by themselves over the entire
market. For this analysis we focus on the specific SP cost parameters
given in Section~\ref{s:narrative}.

We consider two scenarios, both of which have two regions W and E. In
the first scenario SP1 can provide
service in region W (with its $\alpha$ value halved to represent fixed
costs in one region only), and SP2 can provide service in region
E (with its $\alpha$ value also halved). There are three types of user, W, E and WE. WE users need a service
provider that can provide service in both regions. Hence if the SPs do
not cooperate then these users cannot be served. Let $N_X$ be the
fraction of users of type $X$. For our numerical example we assume
that $N_W=N_E=N_{WE}=\frac{1}{3}$. 

In the case without sharing there is no competition and we have,
\begin{eqnarray*}
p_1^{mon}=\$12.5M,&~&q_1^{mon}=14.18PB,~\Pi_1^{mon}=\$116.82M\\
p_2^{mon}=\$10M,&~&q_2^{mon}=18.74PB,~\Pi_2^{mon}=\$99.96M
\end{eqnarray*}

In the case of cooperation we assume that each SP builds the network
in its ``own'' region but the two SPs cooperate on price over the
entire market. Hence the combined entity has a monopoly with parameters
$(\alpha_1,\beta_1)$ in the W region and a monopoly with parameters
$(\alpha_2,\beta_2)$ in the E region. Thus, 
$$
p^{coop}=\$11.25M,~q^{coop}=48.54PB,~\Pi^{coop}=\$362M
$$
Hence from the Shapley value we have,
\begin{eqnarray*}
\Pi_1^{coop}&=&\frac{1}{2}(116.82+362-99.96)=\$189.43M\\
\Pi_2^{coop}&=&\frac{1}{2}(99.96+362-116.82)=\$172.57M
\end{eqnarray*}

In the second scenario we assume initally that SP1 can only serve users in the W
region but SP2 can serve users in both the W and E regions. Hence SP2
has a monopoly on both the E users and the WE users. For these two
sets of users we have,
$$
p_2=\$10M,~q_2=37.49PB
$$
For the W users we have a competition between the SPs. If it is a
Nash-Cournot competition then we have:
\begin{eqnarray*}
p_1^{NC}=\$3.75M,&~&q_1^{NC}=26.62PB\\
p_2^{NC}=\$3.75M,&~&q_2^{NC}=37.26PB
\end{eqnarray*}
The total profit in this case is:
\begin{eqnarray*}
\Pi_1^{NC}=\$8.28M,&~&\Pi_2^{NC}=\$265M
\end{eqnarray*}
If it is a Bertrand competition then SP2 is always incentivized to
compete since it acts as a monopoly for the E and WE users. Hence it
sets its price to the minimum that drives SP1 out of the market for
the W users, i.e.\ we have,
\begin{eqnarray*}
p_1^B=\$3.75M,&~&q_1^B=0PB,~\Pi_1^B=\$0M\\
p_2^B\in\{\$2.77M,\$10M\},&~&q_2^B=130.77PB,~\Pi_2^B=\$272M
\end{eqnarray*}
where $p_2^B=\$2.77M$ for the W users and $p_2^B=\$10M$ for the WE and
E users. 
For the cooperative solution the SPs have to use the SP2 parameters
in the E region but they can use the optimum of the SP1 and SP2
parameters in the W region. Hence in this case we have,
$$
p^{coop}=\$11.25M,~q^{coop}=48.54PB,~\Pi^{coop}=\$362M.
$$
In order to calculate the Shapley value to split the profit we need to
know the individual monopoly values in this case. 
\begin{eqnarray*}
p_1^{mon}=\$12.5M,&~&q_1^{mon}=14.18PB,~\Pi_1^{mon}=\$117M\\
p_2^{mon}=\$10.0M,&~&q_2^{mon}=56.2PB,~\Pi_2^{mon}=\$350M.
\end{eqnarray*}
Hence,
\begin{eqnarray*}
\Pi_1^{coop}&=&\frac{1}{2}(117+362-350)=\$64.5M\\
\Pi_2^{coop}&=&\frac{1}{2}(350+362-117)=\$297.5M
\end{eqnarray*}

\section{Proof of Lemma~\ref{l:monopoly}}
\label{s:monopoly}
\begin{proof}
In the following we drop the $mon$ superscript. 
We determine the
solution by setting $d\Pi/dq=0$. Recall our assumption that $\ve>1$.\footnote{We
  assume $\ve>1$ since otherwise the SPs would be incentivized to set
  an arbitrarily high price.}  
\begin{eqnarray*}
\Pi &=& \frac{q^{1-(1/\ve)}}{Q^{-1/\ve}}-(\alpha+\beta q)\\
\frac{d\Pi}{dq} &=& \frac{\ve-1}{\ve}\frac{q^{-1/\ve}}{Q^{-1/\ve}}-\beta.
\end{eqnarray*}
Hence $\frac{d\Pi}{dq}=0$ if and only if,
\begin{eqnarray*}
\frac{q^{-1/\ve}}{Q^{-1/\ve}}&=&\beta\frac{\ve}{\ve-1}\\
\Leftrightarrow q &=& Q\left(\frac{\ve\beta}{\ve-1}\right)^{-\ve}\\
\Leftrightarrow p &=&
\left(\frac{Q}{Q}\left(\frac{\beta\ve}{\ve-1}\right)^{-\ve}\right)^{-1/\ve}\\
\Leftrightarrow p &=& \frac{\beta\ve}{\ve-1}
\end{eqnarray*}
\end{proof}

\end{document}